\newcommand{\cmark}{\ding{51}}
\newcommand{\xmark}{\ding{55}}
\theoremstyle{plain}
\newtheorem{thm}{Theorem}[section]
\newtheorem{prop}[thm]{Proposition}
\newtheorem{cor}[thm]{Corollary}
\newtheorem{lemma}[thm]{Lemma}
\theoremstyle{definition}
\newtheorem{defn}[thm]{Definition}
\newtheorem{rmk}[thm]{Remark}
\newtheorem{ex}[thm]{Example}
\newcommand{\F}{{\mathcal F}}
\newcommand{\FF}{{\mathbb F}}
\newcommand{\M}{{\mathcal M}}
\newcommand{\NN}{{\mathbb N}}
\newcommand{\ZZ}{{\mathbb Z}}
\newcommand{\HF}{\mathrm{HF}}
\DeclareMathOperator{\reg}{reg}
\DeclareMathOperator{\rowsp}{rowsp}
\DeclareMathOperator{\sd}{sd}
\DeclareMathOperator{\Syz}{Syz}
\DeclareMathOperator{\ff}{ff}
\DeclareMathOperator{\ttop}{top}
\DeclareMathOperator{\maxGB}{max.GB.deg}
\DeclareMathOperator{\Triv}{Triv}
\DeclareMathOperator{\chara}{char}
\title{Solving degree, last fall degree, and related invariants}
\author{Alessio Caminata}
\address{Alessio Caminata, Dipartimento di Matematica, Università di Genova, via Dodecaneso 35, 16146 Genova, Italy}  
\email{caminata@dima.unige.it}
\author{Elisa Gorla}
\address{Elisa Gorla, Institut de Math\'ematiques, Universit\'e de Neuch\^atel, 
Rue Emile-Argand 11, 2000 Neuch\^atel, Switzerland}  
\email{elisa.gorla@unine.ch}
\thanks{\textit{Mathematics Subject Classification (2020)}: 13P10, 13P15, 13P25, 68W30  . 
	\\ \indent \textit{Keywords and phrases:} degree of regularity, Castelnuovo--Mumford regularity, Gr\"obner bases, last fall degree, solving degree
	\\ The first author is supported by the Italian PRIN2020, Grant number 2020355B8Y \enquote{Squarefree Gr\"obner degenerations, special varieties and related topics}.}
\begin{document}
\maketitle 

\begin{abstract}
In this paper we study and relate several invariants connected to the solving degree of a polynomial system. This provides a rigorous framework for estimating the complexity of solving a system of polynomial equations via Gr\"obner bases methods. Our main results include a connection between the solving degree and the last fall degree and one between the degree of regularity and the Castelnuovo--Mumford regularity.
\end{abstract}

\section*{Introduction}

As computational problems can often be modelled via polynomial equations, several security estimates in cryptography depend on the complexity of polynomial system solving. Within public key cryptography, polynomial system solving plays a crucial role in index-calculus on Abelian varieties \cite{Gaudry} as well as in multivariate cryptography \cite{DPS20}. The relation-collection phase of index calculus algorithms on elliptic and hyperelliptic curves relies on the ability to solve multivariate polynomial systems over a finite field. Being able to accurately predict the complexity of solving these systems has an impact on the balancing of complexities between the relation-collection and the linear-algebra phase, and consequently on the complexity of the whole algorithm. Multivariate cryptography is one of the current proposals for post-quantum cryptography and it offers short digital signatures which are believed to be quantum-resistant. Forging a signature requires finding a solution to a system of multivariate polynomial equations. Therefore, security estimates in multivariate cryptography rely on estimates for the complexity of solving certain polynomial systems. 

The solutions of a system of polynomial equations over a finite field can be computed in polynomial time from a lexicographic Gr\"obner basis of the system. Nowadays, the most efficient algorithms to compute Gr\"obner bases belong to the family of linear-algebra-based algorithms, including F4/F5 and the family of XL Algorithms \cite{Fau99,Fau02,XL,MXL}. The complexity of these algorithms is bounded from above by a known function of the solving degree. Therefore, an estimate on the solving degree of a system of equations directly translates into an upper bound on the complexity of solving the system.

Computer experiments are often used to provide an indication on how the solving degree scales with the parameters of the system. In multivariate cryptography, however, the complexity of solving the polynomial system is huge by design and there is a large gap between the size of the parameters for which we can compute the solving degree experimentally and the proposed system parameters. Because of this large gap, the security estimates obtained from the experimental results by extrapolation have limited reliability. In such a scenario, proven security estimates become essential.

Unfortunately however, finding the solving degree of a system without computing its Gr\"obner basis is often hard. This motivated the introduction of several algebraic invariants related to the solving degree. The main ones are the first \cite{DG10,DS13} and last fall degrees \cite{Huang2018, HKY15}, the degree of regularity \cite{BFS04,Bar04}, and the Castelnuovo--Mumford regularity \cite{CCG, CG21, CG22}. 
These invariants are routinely used within the cryptographic community in order to estimate the solving degree. Relating them with the solving degree and with each other is therefore essential for producing accurate and provable security estimates.

Several connections between these invariants, both heuristic and proven, are known.
In \cite{CG21} we showed that, under suitable assumptions, the Castelnuovo--Mumford regularity of the homogenization of a system is an upper bound for its solving degree. In \cite{Huang2018} the authors outlined an algorithm to compute the solutions of a polynomial system and provided an upper bound on its complexity in terms of its last fall degree. Later in \cite{Gorla2018} it was shown that the last fall degree is a lower bound for the solving degree. Only recently in \cite{ST,T} it was shown that, under suitable assumptions, the solving degree is bounded from above by twice the degree of regularity.
Finally, while the first fall degree is widely used as a proxy for the solving degree, their relation is only heuristic. Several authors in fact assume that the solving degree and the first fall degree are not far apart. This is usually referred to as {\em first fall degree assumption}. The intuition behind this assumption is that, once some degree falls appear, new polynomials are added to the computation and this makes the computation terminate soon after. There are examples for which this assumption does not hold, see e.g. \cite{DS13}, however many authors believe that it is satisfied with high probability, see e.g. \cite{HPST}.

In this paper, we explore the relations between these invariants. We discuss the new equivalent definitions for the last fall degree and provide two new ones, most notably one that involves the new concept of degree falls. We also look at the first fall degree and show by means of examples that it may be arbitrarily larger or smaller than either of the other invariants. 

Our main theorem, Theorem~\ref{sd&lfd}, shows that, for any degree-compatible term order, the solving degree of a system is the maximum between its last fall degree and the largest degree of an element in a reduced Gr\"obner basis of the system. This provides a proof for the intuitive fact that the two key ingredients in determining the solving degree of a system are the degrees of the elements in its reduced Gr\"obner basis and the degree falls. Notice that, while it is clear that these two ingredients play a role in determining the solving degree, until now it was not known that these are the only two relevant factors. 

Another main result, Theorem~\ref{thm:reg}, relates the degree of regularity of a system with the Castelnuovo--Mumford regularity of its homogenization. While for a homogeneous system the degree of regularity, the Castelnuovo--Mumford regularity, and the solving degree of the system often coincide, the relation between these invariants is much more complicated for a non-homogeneous system. If the system is not homogeneous, both the degree of regularity and the Castelnuovo--Mumford regularity are algebraic invariants of associated homogeneous systems. Their relation, however, depends on a number of factors and was until now unclear. In this paper, we show that the degree or regularity is smaller than or equal to the Castelnuovo--Mumford regularity. This is consistent with the known fact that the solving degree is bounded from above by the Castelnuovo--Mumford regularity \cite{CG21} and by twice the degree of regularity \cite{ST,T}.

The paper is structured as follows.
In Section~\ref{sect:prelim} we introduce all the invariants that we discuss in the paper. In Section~\ref{sect:lfd} we compare two equivalent definitions of last fall degree. In Definition~\ref{defn:degfalls} we formalize the concept of degree falls and in Theorem~\ref{d_f} we prove that the last fall degree is the largest degree in which a degree fall occurs. Finally, in Theorem~\ref{gbasis} we show that, for any degree-compatible term order, the last fall degree is equal to the largest degree in which a degree fall associated to an element in a Gr\"obner basis occurs. Section~\ref{section:sdeLFD} contains the main result of this paper. In Theorem~\ref{sd&lfd} we prove that the solving degree of a system is the maximum between its last fall degree and the degrees of the elements in its reduced Gr\"obner basis, under the assumption that the term order is  degree-compatible. Section~\ref{sect:ffd} focuses on the first fall degree. We show by means of examples that the difference between the first fall degree and any other of the invariants that we have mentioned can be either positive or negative and arbitrarily large in absolute value. 
Finally, in Section~\ref{sect:reg} we discuss the relation between the degree of regularity and the Castelnuovo--Mumford regularity. In Theorem~\ref{thm:reg} we show that the degree of regularity of a system is smaller than or equal to the Castelnuovo--Mumford regularity of its homogenization. 

{\bf Acknowledgements:} We wish to express our gratitude to the referee for the careful reading and the insightful comments, which allowed us to improve the manuscript.

\section{Preliminaries}\label{sect:prelim}

Throughout the paper, we let $k$ be a field and denote by $R=k[x_1,\dots,x_n]$ the polynomial ring in $n$ variables with coefficients in $k$. We let $\mathcal{F}=\{f_1,\dots,f_r\}\subseteq R$ be a system of polynomial equations and let $(\mathcal{F})=(f_1,\ldots,f_r)\subseteq R$ be the ideal generated by the equations of $\mathcal{F}$. For $f\in R$, let $f^{\ttop}$ denote the homogeneous component of $f$ of largest degree, i.e., if $f=\displaystyle\sum_{a_1,\ldots,a_n\geq 0}\alpha_{a_1,\ldots,a_n}x_1^{a_1}\cdots x_n^{a_n}$ with $\alpha_{a_1,\ldots,a_n}\in k$, then $$f^{\ttop}=\sum_{a_1+\ldots+a_n=\deg(f)}\alpha_{a_1,\ldots,a_n}x_1^{a_1}\cdots x_n^{a_n}.$$
We denote by $\mathcal{F^{\ttop}}$ the system $\{f_1^{\ttop},\dots,f_r^{\ttop}\}\subseteq R$. Let $t$ be a new variable and denote by $f^h$ the homogenization of a polynomial $f$ with respect to $t$. Let $\mathcal{F}^h$ be the system $\{f_1^h,\dots,f_r^h\}\subseteq R[t]$.

For an $i\in\mathbb{N}$, we denote by $R_i$ the $k$-vector space generated by the monomials of degree $i$ of $R$. For an ideal $I\subseteq R$, let $I_i=I\cap R_i$. Denote by $R_{\leq i}=\sum_{j\leq i} R_j$ and by $I_{\leq i}=I\cap R_{\leq i}$. 

In this paper we study several invariants used to estimate the complexity of solving a polynomial system via Gr\"obner bases methods. The main such invariant is the solving degree, which by definition measures the complexity of computing a Gr\"obner basis using a linear algebra algorithm that follows the strategy proposed by Lazard in~\cite{Lazard}. We now briefly recall what such an algorithm does.

Fix a term order $\sigma$ on $R$ and a degree $d\geq 1$. Let
$$\M_{\leq d}=\{a\in \{0,\ldots,d\}^n\mid a_1+\ldots+a_n\leq d\}.$$ 
The elements of $\M_{\leq d}$ correspond to the monomials in $R_{\leq d}$ via 
$$a=(a_1,\ldots,a_n)\longleftrightarrow x_1^{a_1}\cdots x_n^{a_n}.$$

Build a matrix $M$ whose columns are indexed by the elements of $\M_{\leq d}$ in decreasing order from left to right with respect to $\sigma$. The rows correspond to polynomials of the form $uf$ where $u\in R$ is a monomial, $f\in\mathcal{F}$, and $\deg(uf)\leq d$. Notice that this includes the possibility that $u=1$. In order to associate a row $\ell(g)$ to a polynomial $g$, write $$g=\sum_{a=(a_1,\ldots,a_n)\in \M_{\leq d}} \alpha_a x_1^{a_1}\cdots x_n^{a_n},$$ then $\ell(g)_a=\alpha_a$. The matrix $M$ is called {\bf Macaulay matrix} of $\mathcal{F}$ in degree $d$.

Perform Gaussian elimination on $M$ to obtain a matrix in reduced row echelon form (RREF). 
Any row $\ell=(\ell_{a}\mid a\in \M_{\leq d})$ in the RREF of $M$ corresponds to a polynomial 
$$f_\ell=\sum_{a\in \M_{\leq d}}\ell_a x_1^{a_1}\cdots x_n^{a_n}.$$ If $\deg(f_\ell)<d$, we add new rows to $M$ corresponding to the polynomials $uf_\ell$ where $u$ is a monomial, $\deg(uf_\ell)\leq d$ and $uf_\ell\not\in\rowsp(M)$. Here $\rowsp(M)$ denotes the rowspace of $M$. Repeat the computation of the RREF and the operation of adding new rows, until there are no new rows to add. Denote by $M_d$ the matrix in RREF computed via this algorithm.

It is clear that $\rowsp(M_d)\subseteq (\F)_{\leq d}$. 
For a given $d$, one may have $\rowsp(M_d)\neq (\F)_{\leq d}$. However, it is well-known that, for $d\gg 0$, $\rowsp(M_d)=(\F)_{\leq d}$ and $\rowsp(M_d)$ contains a Gr\"obner basis of $\F$ with respect to $\sigma$. 

\begin{defn}\label{sd}
Let $\F\subseteq R$ be a polynomial system. The {\bf solving degree} of $\F$ with respect to a term order $\sigma$ is the least $d$ such that $\rowsp(M_d)$ contains a Gr\"obner basis of $(\F)$ with respect to $\sigma$. We denote it by $\sd_{\sigma}(\F)$.
\end{defn}

\begin{defn}
Let $\F\subseteq R$ be a polynomial system and let $\sigma$ be a term order. We denote by $\maxGB_\sigma(\F)$ the largest degree of an element of a reduced Gr\"obner basis of $(\F)$ with respect to $\sigma$.
\end{defn}

Since computing the solving degree without computing  a Gr\"obner basis is hard, several authors have introduced other invariants which they use to estimate the solving degree. The main ones are the first and the last fall degree, the degree of regularity, and the Castelnuovo--Mumford regularity.

The {\bf first fall degree} was introduced by Dubois and Gama in \cite{DG10}. Here we use a modified definition from \cite{DS13}. We consider a finite field $\FF_q$ and we work over $B=\FF_q[x_1,\dots,x_n]/(x_1^q,\dots,x_n^q)$. 
Let $f_1,\dots,f_r\in B$ be homogeneous polynomials. If $f_1,\dots,f_r$ are not homogeneous, use $f_1^{\ttop},\dots,f_r^{\ttop}$ instead of $f_1,\dots,f_r$.
We define a $B$-module homomorphism $\varphi$ by letting $\varphi(b_1,\dots,b_r)=\sum_{i=1}^rb_if_i$ for any $(b_1,\dots,b_r)\in B^r$.
We denote by $\Syz(f_1,\dots,f_r)$ the syzygy module of $f_1,\dots,f_r$, that is, the kernel of $\varphi$. 
An element of $\Syz(f_1,\dots,f_r)$ is called a syzygy of $f_1,\dots,f_r$. In other words, a syzygy is a vector $(b_1,\dots,b_r)\in B^r$ such that $\sum_{i=1}^rb_if_i=0$.
The degree of the syzygy $(b_1,\dots,b_r)$ is $$\deg(b_1,\dots,b_r)=\max\{\deg(b_i)+\deg(f_i)\mid 1\leq i\leq r\}.$$
For any $d\in\mathbb{N}$ we define the vector space $\Syz(\mathcal{F})_d$ of homogeneous syzygies of degree $d$, i.e., syzygies $(b_1,\dots,b_r)$ such that $b_1,\ldots,b_r$ are homogeneous and $\deg(b_i)+\deg(f_i)=d$ for all $1\leq i\leq r$. Notice that $\Syz(\mathcal{F})=\oplus_{d\geq 0}\Syz(\mathcal{F})_d$.

An example of syzygy is given by the Koszul syzygies $f_ie_j-f_je_i$, where  $i\neq j$ or by the syzygies coming from the quotient structure of $B$, that is $f_i^{q-1}e_i$. Here $e_i$ denotes the $i$-th element of the canonical basis of $B^r$. These syzygies are called {\bf trivial syzygies}, because they are always present and do not depend on the structure of $f_1,\dots,f_r$, but rather on the ring structure of $B$.
We define the module $\Triv(f_1,\dots,f_r)$ of trivial syzygies of $f_1,\dots,f_r$ as the submodule of $\Syz(f_1,\dots,f_r)$ generated by 
\[\{ f_ie_j-f_je_i: \ 1\leq i<j\leq r \}\cup\{f_i^{q-1}e_i: \ 1\leq i\leq r \}.\]
We define the vector subspace of homogeneous trivial syzygies of degree $d$ as $\Triv(\mathcal{F})_d=\Triv(\mathcal{F})\cap \Syz(\mathcal{F})_d$.

\begin{defn}
Let $\mathcal{F}\subseteq B$ be a polynomial system.
The {\bf first fall degree} of $\mathcal{F}$ is 
\begin{equation*}
d_{\ff}(\mathcal{F})=\min\{d\in\mathbb{N}: \ \Syz(\mathcal{F}^{\mathrm{top}})_{d}/\Triv(\mathcal{F}^{\mathrm{top}})_{d}\neq 0 \}.
\end{equation*}
\end{defn}

The definition of {\bf Castelnuovo--Mumford  regularity} is the most technical and it requires concepts from commutative algebra that we have not defined here. Therefore, we refer the reader to \cite[Section 3.4]{CG21} for the definition of Castelnuovo--Mumford regularity and a discussion on its relation with the solving degree. For a homogeneous system $\F$, we denote by $\reg(\F)$ the Castelnuovo--Mumford  regularity of the ideal generated by $\F$. If the system $\F$ is not homogeneous, we consider the Castelnuovo--Mumford regularity of the associated homogeneous systems $\F^{\ttop}$ and $\F^h$. 

\par The {\bf{degree of regularity}} was introduced by Bardet, Faug\`ere, and Salvy in \cite{Bar04, BFS04}. Let $I$ be a homogeneous ideal of $R=k[x_1,\dots,x_n]$, we recall that for an integer $d\geq0$ we denote by $I_d=I\cap R_d$ the $k$-vector space of homogeneous polynomials of degree $d$ of $I$.
The function $\HF_I(-):\mathbb{N}\rightarrow\mathbb{N}$, $\HF_I(d)=\dim_{k}I_d$ is called the Hilbert function of $I$.  

\begin{defn}\label{def-dregFaugere}
Let $\mathcal{F}\subseteq R$ be a polynomial system.
Assume that $\left(\mathcal{F}^{\mathrm{top}}\right)_d=R_d$ for $d\gg0$. The {\bf degree of regularity} of $\mathcal{F}$ is
\begin{equation*}
d_{\reg}(\mathcal{F})=\min\{d\geq 0\mid \HF_{(\mathcal{F}^{\mathrm{top}})}(d)=\HF_{R}(d)\}=\min\{d\geq 0\mid (\mathcal{F}^{\mathrm{top}})_d=R_d\}.
\end{equation*} 
\end{defn}

The {\bf last fall degree} was introduced by Huang, Kosters, and Yeo in~\cite{HKY15}. In~\cite[Proposition 2.3 and Proposition~2.8]{Huang2018} the authors outline a linear-algebra-based algorithm which computes the solutions of a non-homogeneous polynomial system $\mathcal{F}$ and show that its complexity is controlled by the last fall degree of the system $\mathcal{F}$. 

\begin{defn}\label{defn:VF}\label{defn:lfd} 
Let $\F\subseteq R$ be a polynomial system. Let $i\in\NN$ and let $V_{\F,i}$ be the smallest $k$-vector space such that:
\begin{itemize}
\item $\F\cap R_{\leq i}\subseteq V_{\F,i}$,
\item if $f\in V_{\F,i}$ and $g\in R_{\leq i-\deg(f)}$, then $fg\in V_{\F,i}$.
\end{itemize}
We set also $V_{\F,\infty}=(\F)$ and $V_{\F,-1}=\emptyset$.

The \textbf{last fall degree} of $\F$ is $$d_{\F}=\min\{d\in\NN\cup\{\infty\}\mid f\in V_{\F,\max\{d,\deg(f)\}} \mbox{ for all } f\in (\mathcal{F})\}.$$
\end{defn}

A direct consequence of the existence of a Gr\"obner basis for $(\mathcal{F})$ is that the last fall degree is a natural number, see also~\cite[Proposition~2.6 i]{Huang2018}. 
The concept of last fall degree also allow us to show that the rowspace of a Macaulay matrix after applying the algorithm described at the beginning of this section does not depend on the choice of the term order, provided that it is degree-compatible.
In fact, it follows from Definition~\ref{defn:VF} that $$V_{\F,d}\supseteq\rowsp(M_d)$$ for all $d$.
In the proof of~\cite[Proposition 2.3]{Huang2018} it is stated that, if $\sigma$ is degree-compatible, then 
\begin{equation}\label{eqn:rowspMM}
V_{\F,d}=\rowsp(M_d)
\end{equation} for all $d\in\ZZ_{\geq 0}$. A proof of \eqref{eqn:rowspMM} is given in~\cite[Theorem~1]{Gorla2018}.

\begin{cor}\label{cor:to_indep}
$\rowsp(M_d)$ does not depend on the choice of $\sigma$, if $\sigma$ is degree-compatible.
\end{cor}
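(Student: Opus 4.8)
The plan is to observe that equation~\eqref{eqn:rowspMM} does all the real work, so that the corollary reduces to a remark about the definition of $V_{\F,d}$. A priori, the matrix $M_d$ constructed by the Lazard-style algorithm depends on $\sigma$: the columns are ordered according to $\sigma$, and Gaussian elimination produces the reduced row echelon form with respect to that ordering. Hence $\rowsp(M_d)$ could in principle vary with the chosen term order, and the content of the corollary is precisely that it does not, as long as we restrict to degree-compatible orders.

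The key step is to invoke~\eqref{eqn:rowspMM}, which asserts that for \emph{any} degree-compatible $\sigma$ one has $\rowsp(M_d)=V_{\F,d}$ for all $d\in\ZZ_{\geq 0}$; this is proved in~\cite[Theorem~1]{Gorla2018}. The crucial observation is then that the right-hand side $V_{\F,d}$ is defined in Definition~\ref{defn:VF} purely in terms of the $k$-vector space structure of $R$ and of polynomial multiplication: it is the smallest $k$-subspace containing $\F\cap R_{\leq d}$ and closed under multiplication by polynomials of appropriately bounded degree. No term order enters this definition at any point. Consequently $V_{\F,d}$ is a single, well-defined subspace of $R_{\leq d}$, the same for every choice of $\sigma$.

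Combining these two facts finishes the argument: given two degree-compatible term orders $\sigma_1$ and $\sigma_2$, write $M_d^{\sigma_1}$ and $M_d^{\sigma_2}$ for the corresponding matrices. By~\eqref{eqn:rowspMM} applied to each order we obtain
\[
\rowsp\!\left(M_d^{\sigma_1}\right)=V_{\F,d}=\rowsp\!\left(M_d^{\sigma_2}\right),
\]
so the rowspace is independent of the degree-compatible order, as claimed. I do not expect any genuine obstacle here: the entire difficulty is concentrated in establishing~\eqref{eqn:rowspMM}, which is cited from~\cite{Gorla2018}, and the only thing that must be noticed afresh is the manifest $\sigma$-independence of the set $V_{\F,d}$. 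The proof is therefore a one-line deduction from the material already set up in the preliminaries.
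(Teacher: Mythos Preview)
Your proposal is correct and matches the paper's approach exactly: the corollary is stated immediately after~\eqref{eqn:rowspMM} with no separate proof, precisely because it follows at once from $\rowsp(M_d)=V_{\F,d}$ together with the fact that $V_{\F,d}$ is defined without reference to any term order. There is nothing to add.
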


The next example shows that, in general, equality \eqref{eqn:rowspMM} and Corollary~\ref{cor:to_indep} do not hold for a term order which is not degree-compatible.

\begin{ex}
Let $d\geq 3$ be an integer. Let $\mathcal{F}=\{x-y^{d-1},x-y^d\}$ and let $\sigma$ be the lexicographic order on $k[x,y]$ with $x>y$. Let $$f=xy-x=y(x-y^{d-1})-(x-y^d)\in V_{\mathcal{F},d}.$$ By the definition of last fall degree, one also has that $xf\in V_{\mathcal{F},d}$. 
The rows of the matrix $M_d$ correspond to the polynomials $x-y^{d-1}, x(x-y^{d-1}), y(x-y^{d-1})$, and $x-y^d$. One can check that all the rows of the RREF of $M_d$ correspond to polynomials of degree $d$, therefore $$\rowsp(M_d)=\langle x-y^{d-1}, x(x-y^{d-1}), y(x-y^{d-1}), y^{d-1}-y^d\rangle.$$ In particular, $xf\not\in\rowsp(M_d)$ since it contains the monomial $x^2y$ in its support.
\end{ex}

\section{Equivalent definitions for the last fall degree}\label{sect:lfd}

In this section, we explore some properties of the last fall degree and discuss different definitions for this invariant.
We start by observing that, by Definition~\ref{defn:lfd}, one has 
$$V_{\F,d}=(\mathcal{F})_{\leq d}$$
for any polynomial system $\mathcal{F}\subseteq R$ and any $d\geq d_{\F}$.

\begin{prop}\label{prop:eq_ideal}
One has $$V_{\F,d_{\F}-1}\neq(\mathcal{F})_{\leq d_{\F}-1}.$$
In particular $$d_{\F}=\min\{d\mid V_{\F,e}=(\mathcal{F})_{\leq e}\mbox{ for all } e\geq d\}.$$

\end{prop}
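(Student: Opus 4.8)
The plan is to establish the displayed inequality $V_{\F,d_\F-1}\neq(\mathcal F)_{\leq d_\F-1}$ first, and then to deduce the alternative characterization of $d_\F$ as a formal consequence. Throughout I will lean on the observation recorded immediately before the statement, namely that $V_{\F,e}=(\mathcal F)_{\leq e}$ for every $e\geq d_\F$. This observation already shows that $d_\F$ belongs to the set $\{d\mid V_{\F,e}=(\mathcal F)_{\leq e}\text{ for all }e\geq d\}$, so the minimum of that set is at most $d_\F$; the real content of the proposition is that the minimum is not strictly smaller, which is exactly what the first inequality rules out.

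For the first part I argue by contradiction, assuming $d_\F\geq 1$ (the degenerate case $d_\F=0$ is handled at once by the convention $V_{\F,-1}=\emptyset$, since $(\mathcal F)_{\leq-1}=\{0\}\neq\emptyset$). Suppose $V_{\F,d_\F-1}=(\mathcal F)_{\leq d_\F-1}$. I will show that then $d=d_\F-1$ already satisfies the defining condition of the last fall degree, i.e. that $f\in V_{\F,\max\{d_\F-1,\deg(f)\}}$ for every $f\in(\mathcal F)$, contradicting the minimality of $d_\F$. Fix $f\in(\mathcal F)$ and split according to $\deg(f)$. If $\deg(f)\leq d_\F-1$, then $\max\{d_\F-1,\deg(f)\}=d_\F-1$ and $f\in(\mathcal F)_{\leq d_\F-1}=V_{\F,d_\F-1}$ by the standing assumption, so the required membership holds. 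If instead $\deg(f)\geq d_\F$, then $\max\{d_\F-1,\deg(f)\}=\deg(f)=\max\{d_\F,\deg(f)\}$, and the membership $f\in V_{\F,\deg(f)}$ is precisely what the definition of $d_\F$ guarantees (equivalently, it follows from the preliminary observation, since $\deg(f)\geq d_\F$). In both cases $f\in V_{\F,\max\{d_\F-1,\deg(f)\}}$, which is the contradiction sought.

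Granting the first part, the second statement is immediate. Writing $d^\ast=\min\{d\mid V_{\F,e}=(\mathcal F)_{\leq e}\text{ for all }e\geq d\}$, the observation gives $d^\ast\leq d_\F$ as noted above. Conversely, for any $d\leq d_\F-1$ the value $e=d_\F-1$ satisfies $e\geq d$ yet $V_{\F,e}\neq(\mathcal F)_{\leq e}$ by the first part, so no such $d$ can lie in the set; hence $d^\ast\geq d_\F$, and therefore $d^\ast=d_\F$.

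The only genuinely delicate point here is bookkeeping rather than mathematics: one must unwind the $\max$ in the definition of $d_\F$ correctly, keep careful track of the boundary degree $\deg(f)=d_\F$ (where the two cases meet), and separate out the degenerate case $d_\F=0$, in which the inequality is forced by the convention for $V_{\F,-1}$ rather than by the case analysis. No input beyond the definitions and the preliminary observation $V_{\F,e}=(\mathcal F)_{\leq e}$ for $e\geq d_\F$ is required.
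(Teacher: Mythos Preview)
Your proof is correct and follows essentially the same approach as the paper's: where the paper directly exhibits a witness $f\in(\mathcal F)_{\leq d_\F-1}\setminus V_{\F,d_\F-1}$ from the minimality of $d_\F$ and rules out $\deg(f)\geq d_\F$, you run the contrapositive, assuming equality and showing that $d_\F-1$ would then satisfy the defining condition. The case split on $\deg(f)$ and the appeal to $f\in V_{\F,\deg(f)}$ when $\deg(f)\geq d_\F$ are identical in both arguments; your explicit treatment of the degenerate case $d_\F=0$ and of the ``In particular'' clause are details the paper leaves implicit.
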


\begin{proof}

By Definition~\ref{defn:lfd} there exists $f\in (\mathcal{F})$ such that $f\not\in V_{\F,\max\{d_{\F}-1,\deg(f)\}}$. 
If $\deg(f)\geq d_{\F}$, then $f\in V_{\F,\deg(f)}$ by Definition~\ref{defn:lfd}. This contradicts the assumption that $f\not\in V_{\F,\max\{d_{\F}-1,\deg(f)\}}=V_{\F,\deg(f)}$.
Therefore $\deg(f)\leq d_{\F}-1$. Then $f\in (\mathcal{F})_{\leq d_{\F}-1}$ and $f\not\in V_{\F,d_{\F}-1}$, in particular $V_{\F,d_{\F}-1}\neq (\mathcal{F})_{\leq d_{\F}-1}.$ 
\end{proof}

\begin{rmk}
The same polynomial $f$ as in the proof of Proposition~\ref{prop:eq_ideal} satisfies $f\not\in V_{\F,d}$ for any $d\leq d_{\F}-1$, since $V_{\F,d}\subseteq V_{\F,d_{\F}-1}$. 
Therefore, for any $\deg(f)\leq d\leq d_{\F}-1$ one has $$V_{\F,d}\neq (\mathcal{F})_{\leq d}.$$
\end{rmk}

Notice that the definition of last fall degree from Proposition~\ref{prop:eq_ideal} is not computationally more efficient than Definition~\ref{defn:lfd}. In fact, the standard way of computing $(\F)_{\leq d}$ is by computing a (part of a) Gr\"obner basis of $\F$. 

We now discuss another equivalent definition of last fall degree, coming from~\cite{Huang2018}. In~\cite[Proposition 2.6]{Huang2018} it is shown that $$d_\F=\min\{d\in\NN\mid V_{\F,e}=V_{\F,e+1}\cap R_{\leq e}\mbox{ for all } e\geq d\}.$$ 
We now study the set of $d$'s such that $V_{\F,d}=V_{\F,d+1}\cap R_{\leq d}$. 

\begin{prop}
Let $d\in\NN$. If $V_{\F,d}=(\mathcal{F})_{\leq d}$, then $V_{\F,d}=V_{\F,d+1}\cap R_{\leq d}.$ In other words, 
\begin{equation}\label{eq:inclusion}
\{d\in\NN\mid V_{\F,d}=(\mathcal{F})_{\leq d}\}\subseteq\{d\in\NN\mid V_{\F,d}=V_{\F,d+1}\cap R_{\leq d}\}.
\end{equation} 
\end{prop}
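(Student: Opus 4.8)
The plan is to show the set inclusion by taking an arbitrary $d$ with $V_{\F,d}=(\mathcal{F})_{\leq d}$ and proving the equality $V_{\F,d}=V_{\F,d+1}\cap R_{\leq d}$ via two containments. The containment $V_{\F,d}\subseteq V_{\F,d+1}\cap R_{\leq d}$ is essentially formal: every element of $V_{\F,d}$ has degree at most $d$ by construction, so it lies in $R_{\leq d}$, and $V_{\F,d}\subseteq V_{\F,d+1}$ follows from the fact that $V_{\F,i}$ is monotone in $i$. This monotonicity should itself be recorded, since it follows directly from Definition~\ref{defn:lfd}: the two closure conditions defining $V_{\F,d}$ are weaker versions of those defining $V_{\F,d+1}$ (the seed set $\F\cap R_{\leq d}$ is contained in $\F\cap R_{\leq d+1}$, and the multiplication rule allows more products in higher degree), so the smallest space satisfying the degree-$d$ conditions is contained in the smallest space satisfying the degree-$(d+1)$ conditions.

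The reverse containment $V_{\F,d+1}\cap R_{\leq d}\subseteq V_{\F,d}$ is where the hypothesis $V_{\F,d}=(\mathcal{F})_{\leq d}$ enters. First I would observe the general inclusion $V_{\F,d+1}\subseteq(\mathcal{F})_{\leq d+1}$, which holds for every index because $V_{\F,i}$ is built from elements of $\F$ and their multiples, all of which lie in the ideal $(\mathcal{F})$, and all have degree at most $i$. Intersecting with $R_{\leq d}$ gives
$$V_{\F,d+1}\cap R_{\leq d}\subseteq(\mathcal{F})_{\leq d+1}\cap R_{\leq d}=(\mathcal{F})_{\leq d}.$$
Now the assumption $V_{\F,d}=(\mathcal{F})_{\leq d}$ converts this into $V_{\F,d+1}\cap R_{\leq d}\subseteq V_{\F,d}$, which is exactly the containment needed. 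Combining the two containments yields the claimed equality, and since $d$ was an arbitrary element of the left-hand set, the inclusion \eqref{eq:inclusion} follows.

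I do not expect any serious obstacle here: the statement is really a bookkeeping exercise that isolates how the hypothesis interacts with the two general properties of the spaces $V_{\F,i}$ (monotonicity in the index and containment in the truncated ideal). The only point requiring a little care is making sure the general inclusions $V_{\F,i}\subseteq V_{\F,i+1}$ and $V_{\F,i}\subseteq(\mathcal{F})_{\leq i}$ are either cited or verified from Definition~\ref{defn:lfd}, rather than assumed; both are immediate from the minimality characterization of $V_{\F,i}$ as the smallest vector space closed under the stated operations. The equation $(\mathcal{F})_{\leq d+1}\cap R_{\leq d}=(\mathcal{F})_{\leq d}$ is tautological from the definitions $(\mathcal{F})_{\leq j}=(\mathcal{F})\cap R_{\leq j}$ and $R_{\leq d}\subseteq R_{\leq d+1}$, so no genuine difficulty arises. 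The whole argument should fit in a few lines.
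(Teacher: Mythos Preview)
Your argument is correct and is essentially the same as the paper's: both use the general facts $V_{\F,d}\subseteq V_{\F,d+1}$, $V_{\F,i}\subseteq(\mathcal{F})_{\leq i}$, and $(\mathcal{F})_{\leq d+1}\cap R_{\leq d}=(\mathcal{F})_{\leq d}$, combining them with the hypothesis. The paper merely compresses your two containments into a single chain $V_{\F,d}=(\mathcal{F})_{\leq d}=(\mathcal{F})_{\leq d+1}\cap R_{\leq d}\supseteq V_{\F,d+1}\cap R_{\leq d}\supseteq V_{\F,d}$.
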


\begin{proof}
If $V_{\F,d}=(\mathcal{F})_{\leq d}$, then $$V_{\F,d}=(\mathcal{F})\cap R_{\leq d}=((\mathcal{F})\cap R_{\leq d+1})\cap R_{\leq d}\supseteq V_{\F,d+1}\cap R_{\leq d}\supseteq V_{\F,d}$$ where the last inclusion holds for all $d$ by Definition~\ref{defn:VF}.
\end{proof}

Since the equality $V_{\F,d}=V_{\F,d+1}\cap R_{\leq d}$ can be checked for any given $d$ by checking whether $\rowsp(M_{d+1})\cap R_{\leq d}=\rowsp(M_{d})$, one may think that the definition from~\cite[Proposition 2.6]{Huang2018} is more computationally friendly than that from Proposition~\ref{prop:eq_ideal}. Unfortunately, one cannot hope to compute $d_{\F}$ by progressively increasing the value of $d$ until one finds a $d$ for which $V_{\F,d}=V_{\F,d+1}\cap R_{\leq d}$.
In fact, in Example~\ref{ex:intervals} we show that there are values of $d$ for which $V_{\F,d-1}=V_{\F,d}\cap R_{\leq d-1}$ but $V_{\F,d}\neq V_{\F,d+1}\cap R_{\leq d}$. 

Notice moreover that, while the definition from Proposition~\ref{prop:eq_ideal} and that from~\cite[Proposition 2.6]{Huang2018} are equivalent, the inclusion in \eqref{eq:inclusion} may be strict. In fact, even if $[d_\F,+\infty)$ is the largest right-unbounded interval contained in either of the sets from \eqref{eq:inclusion}, each of the two sets can contain smaller values of $d\in\NN$. The next example shows that this can indeed occur.

\begin{ex}\label{ex:intervals}
Let $\F=\{x^2-x, xy-1, w^6-w, w^5z^5-1\}\subseteq R=k[w,x,y,z]$ with $\chara k=0$ or $\chara k\gg 0$. It is easy to check that $(\F)=(x-1,y-1,w^5-1,z^5-1)$.
Since $\F$ contains no elements of degree 0 or 1, then $$V_{\F,0}=V_{\F,1}=0.$$
Moreover $$V_{\F,2}=\langle x^2-x, xy-1 \rangle,$$ since the Macaulay matrix associated to $\F$ in degree 2 with respect to a degree-compatible term order is in RREF and contains only rows of degree 2.
By repeatedly performing Gaussian elimination and adding rows whenever the matrix contains new rows of degree smaller than 3, one can check that $x-1, y-1\in V_{\F,3}$, hence
$$V_{\F,3}\supseteq (x-1)R_{\leq 2}+(y-1)R_{\leq 2}=(\F)_{\leq 3}.$$
Therefore $$V_{\F,d}=V_{\F,d+1}\cap R_{\leq d}$$ for $d=0,1$, but $$x-1,y-1\in (V_{\F,3}\cap R_{\leq 2})\setminus V_{\F,2}.$$
Since $(\F)_{\leq 4}=(x-1,y-1)_{\leq 4}$, then $$V_{\F,3}=V_{\F,4}\cap R_{\leq 3}.$$
One can also check that $V_{\F,5}=V_{\F,4}\cdot R_{\leq 1}$, hence $$V_{\F,4}=V_{\F,5}\cap R_{\leq 4}.$$
Notice moreover that, for $0\leq d\leq 5$ one has $$V_{\F,d}=(\F)_{\leq d} \mbox{ if and only if } d\neq 1,2,5.$$

Now $V_{\F,6}=V_{\F,5}\cdot R_{\leq 1}+\langle w^6-w\rangle$ and $V_{\F,d}=V_{\F,6}\cdot R_{\leq d-6}$ for $6\leq d\leq 9$. Finally, $V_{\F,10}=V_{\F,9}\cdot R_{\leq 1}+\langle w^5z^5-1\rangle$. Therefore 
$$V_{\F,d}=V_{\F,d+1}\cap R_{\leq d}$$ for $5\leq d\leq 9$.
Moreover, one can check that $w^5-1\not\in V_{\F,d}$ for $d\leq 10$, hence $$V_{\F,d}\neq (\F)_{\leq d} \mbox{ for } 6\leq d\leq 10.$$ 
One can verify by direct computation that $w^5-1,z^5-1\in V_{\F_{11}}$. Therefore $$V_{\F,10}\neq V_{\F,11}\cap R_{\leq 10}.$$ Moreover, $$V_{\F,d}=(\F)_{\leq d} \mbox{ for } d\geq 11,$$ therefore also
$$V_{\F,d}=V_{\F,d+1}\cap R_{\leq d}\mbox{ for } d\geq 11.$$ 
In fact, this shows that $d_{\F}=11$.
In the next table, we summarize for which values of $d$ each equality holds.

\bgroup
\def\arraystretch{1.5}
\begin{tabular}{l|c|c|c|c|c|c|c|c|c|c|c|c|c|c}
$d$ & 0 & 1 & 2 & 3 & 4 & 5 & 6 & 7 & 8 & 9 & 10 & 11 & 12 & \ldots  \\
\hline
$V_{\F,d}=V_{\F,d+1}\cap R_{\leq d}$ & \cmark & \cmark & \xmark & \cmark & \cmark & \cmark & \cmark & \cmark & \cmark & \cmark & \xmark & \cmark & \cmark & \ldots \\
\hline
$V_{\F,d}=(\F)_{\leq d}$ & \cmark & \xmark & \xmark & \cmark & \cmark & \xmark & \xmark & \xmark & \xmark & \xmark & \xmark & \cmark & \cmark & \ldots 
\end{tabular}
\egroup \\ \\ \noindent
\end{ex}

\subsection{Degree falls and last fall degree}

In this subsection, we give a mathematical formulation for the notion of degree falls. This allows us to prove that the last fall degree is the largest degree in which a degree fall occurs. This is also the largest degree for which a degree fall occurs for an element of a Gr\"obner basis. We start by defining degree falls.

\begin{defn}
Let $\F\subseteq R$ be a polynomial system. For any $f\in (\mathcal{F})$ let
\[d_f=\min\{d\in\mathbb{N}: f\in V_{\mathcal{F},d}\}.\]
\end{defn}

\noindent We always have that $d_f\geq \deg(f)$. 

\begin{defn}\label{defn:degfalls}
If $d_f>\deg(f)$, then we say that $f$ has a {\bf degree fall} in degree $d_f$. 
We say that $\mathcal{F}$ has a {\bf degree fall} if there is an $f\in(\mathcal{F})$ such that $f$ has a degree fall. Else we say that  $\mathcal{F}$ has {\bf no degree falls}.
\end{defn}

For example, a homogeneous system has no degree falls. Moreover, its last fall degree is zero. More in general, one can prove the following.

\begin{lemma}\label{lemma:lfd=0}
Let $\mathcal{F}\subseteq R$ be a polynomial system. Then $\mathcal{F}$ has no degree falls if and only if $d_{\mathcal{F}}=0$.
\end{lemma}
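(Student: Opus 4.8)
The plan is to unwind both sides of the equivalence and show that each reduces to the same pointwise condition on the elements of $(\mathcal{F})$. The crucial elementary observation is that $\max\{0,\deg(f)\}=\deg(f)$ for every $f\in(\mathcal{F})$, since degrees are nonnegative. Hence the condition defining $d_{\mathcal{F}}$ at the value $d=0$, namely ``$f\in V_{\mathcal{F},\max\{0,\deg(f)\}}$ for all $f\in(\mathcal{F})$'', becomes simply ``$f\in V_{\mathcal{F},\deg(f)}$ for all $f\in(\mathcal{F})$''.

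First I would translate the statement $d_{\mathcal{F}}=0$ into membership in the defining set. Since $d_{\mathcal{F}}$ is the minimum of a subset of $\NN\cup\{\infty\}$ and $0=\min\NN$, we have $d_{\mathcal{F}}=0$ if and only if $0$ lies in that set, i.e. if and only if $f\in V_{\mathcal{F},\deg(f)}$ for all $f\in(\mathcal{F})$. Next I would connect this to degree falls via the invariant $d_f$. By definition $d_f=\min\{d\mid f\in V_{\mathcal{F},d}\}$, so $f\in V_{\mathcal{F},\deg(f)}$ is equivalent to $d_f\leq\deg(f)$; combined with the always-valid inequality $d_f\geq\deg(f)$ this forces $d_f=\deg(f)$, which is exactly the statement that $f$ has no degree fall. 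Conversely, $d_f=\deg(f)$ immediately gives $f\in V_{\mathcal{F},\deg(f)}$.

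Putting the two translations together yields the chain $d_{\mathcal{F}}=0 \iff f\in V_{\mathcal{F},\deg(f)}$ for all $f\in(\mathcal{F}) \iff d_f=\deg(f)$ for all $f\in(\mathcal{F}) \iff \mathcal{F}$ has no degree falls, so both implications follow at once. I do not anticipate a substantive obstacle here: the content is purely a careful matching of definitions. The only points that require genuine care are verifying that $d_{\mathcal{F}}=0$ is equivalent to $0$ belonging to the defining set (which rests solely on $0$ being the least natural number, so that the minimum attains the bottom of its range), and correctly handling the quantifier ``for all $f\in(\mathcal{F})$'', which is what promotes the pointwise equality $d_f=\deg(f)$ to the global statement that $\mathcal{F}$ has no degree falls.
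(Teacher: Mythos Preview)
Your proof is correct. The forward direction ($d_{\mathcal{F}}=0\Rightarrow$ no degree falls) is argued exactly as in the paper, by unwinding the defining set at $d=0$.

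For the converse, however, you take a more direct route than the paper. The paper assumes $\mathcal{F}$ has no degree falls, fixes a degree-compatible term order and a Gr\"obner basis $\{g_1,\dots,g_s\}$ of $(\mathcal{F})$, writes an arbitrary $f\in(\mathcal{F})$ as $f=\sum h_ig_i$ with $\deg(h_ig_i)\leq\deg(f)$, and then uses $g_i\in V_{\mathcal{F},\deg(g_i)}\subseteq V_{\mathcal{F},\deg(f)}$ together with the closure property of $V_{\mathcal{F},\deg(f)}$ to conclude $f\in V_{\mathcal{F},\deg(f)}$. You instead observe that ``$\mathcal{F}$ has no degree falls'' already \emph{means}, by Definition~\ref{defn:degfalls}, that $d_f=\deg(f)$ for every $f\in(\mathcal{F})$, which is precisely the condition $f\in V_{\mathcal{F},\deg(f)}$ for all $f$. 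Since the definition of degree fall in the paper is stated for arbitrary $f\in(\mathcal{F})$ (not just generators or Gr\"obner basis elements), your shortcut is legitimate and the Gr\"obner basis machinery is not needed here. The paper's argument does, however, foreshadow the technique used in the subsequent Theorems~\ref{d_f} and~\ref{gbasis}, where one genuinely needs to pass through a Gr\"obner basis representation.
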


\begin{proof}
If $d_{\F}=0$, then for all $f\in(\F)$ we have $f\in V_{\F,\deg(f)}$. Therefore $\F$ has no degree falls. Conversely, assume that $\mathcal{F}$ has no degree falls and let $f\in(\F)$.  
Since $d_f=\deg(f)$, then $f\in V_{\mathcal{F},\deg(f)}$. This proves that $d_{\mathcal{F}}=0$.
\end{proof}

We can now prove that the last fall degree is the largest degree in which a degree fall occurs.

\begin{thm}\label{d_f}
Let $\mathcal{F}\subseteq R$ be a polynomial system. If $\mathcal{F}$ has a degree fall, then 
\[
d_{\mathcal{F}}=\max\{d_f: \ f\in (\mathcal{F}), d_f>\deg(f)\}.
\]
\end{thm}

\begin{proof}
Let $f\in\mathcal{F}$. If $d_f>\deg(f)$, then $f\not\in V_{\mathcal{F},d}$ for any $d<d_f$. In particular, $d_\mathcal{F}>d_f-1$ by the definition of last fall degree. This implies that $$d_{\mathcal{F}}\geq\sup\{d_f: \ f\in (\mathcal{F}), d_f>\deg(f)\}.$$
Notice that, by the finiteness of the last fall degree, the supremum of the set $\{d_f: \ f\in (\mathcal{F}), d_f>\deg(f)\}$ is in fact a maximum. 

We now prove the reverse inequality.
Let $\mu=\max\{d_f: \ f\in (\mathcal{F}), d_f>\deg(f)\}$ and fix $f\in(\mathcal{F})$. If $d_f>\deg(f)$ we have $f\in V_{\mathcal{F},d_f}\subseteq V_{\mathcal{F},\mu}$ and $\mu=\max\{\mu,\deg(f)\}$ since $\mu\geq d_f>\deg(f)$. If $d_f=\deg(f)$, then $f\in V_{\mathcal{F},\deg(f)}\subseteq V_{\mathcal{F},\max\{\mu,\deg(f)\}}$. Therefore, for all $f\in(\mathcal{F})$, $f\in V_{\mathcal{F},\max\{\mu,\deg(f)\}}$, so $\mu\in\{d\in\NN\cup\{\infty\}\mid f\in V_{\F,\max\{d,\deg(f)\}} \mbox{ for all } f\in (\mathcal{F})\}.$ It follows that $d_\mathcal{F}\leq\mu$.
\end{proof}

It is worth noticing that the last fall degree is the largest degree of a degree fall occuring for an element of a Gr\"obner basis of $\mathcal{F}$ with respect to a degree-compatible term order. 

\begin{thm}\label{gbasis}
Let $\mathcal{F}\subseteq R$ be a polynomial system, and let $\{g_1,\dots,g_s\}$ be a Gr\"obner basis of $\mathcal{F}$ with respect to a degree compatible term order. If $\mathcal{F}$ has a degree fall, then $$d_{\mathcal{F}}=\max\{d_{g_i}: \ \deg(g_i)<d_{g_i}\}.$$
In particular, if $d_{g_i}=\deg(g_i)$ for $i=1,\ldots,s$, then $d_{\mathcal{F}}=0$.
\end{thm}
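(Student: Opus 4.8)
The plan is to prove the equality by establishing the two inequalities separately, leveraging Theorem~\ref{d_f}, which already identifies $d_{\mathcal{F}}$ as the largest degree fall occurring among \emph{all} elements of $(\mathcal{F})$. The inequality $d_{\mathcal{F}}\geq\max\{d_{g_i}:\deg(g_i)<d_{g_i}\}$ is then immediate: each $g_i$ with $\deg(g_i)<d_{g_i}$ is an element of $(\mathcal{F})$ exhibiting a degree fall, so $d_{g_i}$ belongs to the set over which Theorem~\ref{d_f} takes its maximum, whence $d_{g_i}\leq d_{\mathcal{F}}$. The substance of the theorem is therefore the reverse inequality, which asserts that a degree fall of maximal degree can always be witnessed by an element of the given Gr\"obner basis, rather than merely by some element of $(\mathcal{F})$.

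For the inequality $d_{\mathcal{F}}\leq\max\{d_{g_i}:\deg(g_i)<d_{g_i}\}$, I would adapt the expansion argument from the proof of Lemma~\ref{lemma:lfd=0}. Set $\nu=\max\{d_{g_i}:\deg(g_i)<d_{g_i}\}$ and fix an arbitrary $f\in(\mathcal{F})$. Since the term order is degree-compatible, write $f=\sum_{i=1}^s h_ig_i$ with $\deg(h_ig_i)\leq\deg(f)$ for every $i$ with $h_i\neq 0$. The goal is to show that each summand $h_ig_i$ lies in $V_{\mathcal{F},\max\{\nu,\deg(f)\}}$, so that $f$ does too. For the $g_i$ \emph{without} a degree fall one has $g_i\in V_{\mathcal{F},\deg(g_i)}$, while for those \emph{with} a degree fall one has $g_i\in V_{\mathcal{F},d_{g_i}}\subseteq V_{\mathcal{F},\nu}$; in both cases $g_i\in V_{\mathcal{F},\max\{\nu,\deg(f)\}}$. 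Since $\deg(h_i)+\deg(g_i)=\deg(h_ig_i)\leq\deg(f)\leq\max\{\nu,\deg(f)\}$, the multiplier $h_i$ lies in $R_{\leq\max\{\nu,\deg(f)\}-\deg(g_i)}$, so the closure property in Definition~\ref{defn:VF} gives $h_ig_i\in V_{\mathcal{F},\max\{\nu,\deg(f)\}}$. Summing yields $f\in V_{\mathcal{F},\max\{\nu,\deg(f)\}}$ for all $f\in(\mathcal{F})$, and by the definition of last fall degree this forces $d_{\mathcal{F}}\leq\nu$.

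Two points require care. First, the maximum $\nu$ must be taken over a nonempty set, and this is exactly where the hypothesis that $\mathcal{F}$ has a degree fall enters: if no $g_i$ had a degree fall, then $d_{g_i}=\deg(g_i)$ for all $i$, and the very same expansion argument (which is precisely the proof of Lemma~\ref{lemma:lfd=0}) would give $f\in V_{\mathcal{F},\deg(f)}$ for every $f\in(\mathcal{F})$, i.e. $d_{\mathcal{F}}=0$, contradicting the hypothesis. Hence at least one $g_i$ satisfies $\deg(g_i)<d_{g_i}$ and $\nu$ is well defined. Second, the bookkeeping must handle the two kinds of generators uniformly; I expect this degree accounting---checking that the bound $\max\{\nu,\deg(f)\}-\deg(g_i)$ is simultaneously large enough for the closure step and compatible with $\deg(h_ig_i)\leq\deg(f)$---to be the only genuinely delicate part of the argument, and it becomes routine once the generators are split according to whether they exhibit a degree fall. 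Combining the two inequalities yields the claimed equality.
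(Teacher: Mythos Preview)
Your proposal is correct and follows essentially the same approach as the paper: both directions use Theorem~\ref{d_f} for the easy inequality and, for the reverse, expand an arbitrary $f\in(\mathcal{F})$ as $\sum h_ig_i$ with $\deg(h_ig_i)\leq\deg(f)$, then split the $g_i$ according to whether they exhibit a degree fall and verify the closure condition of Definition~\ref{defn:VF}. The paper additionally reduces to a \emph{minimal} Gr\"obner basis before proving the reverse inequality, but this step is not needed for the argument and your direct treatment is slightly cleaner; you also make explicit, via Lemma~\ref{lemma:lfd=0}, why the set $\{d_{g_i}:\deg(g_i)<d_{g_i}\}$ is nonempty, a point the paper leaves implicit.
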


\begin{proof}
Up to reordering, we may assume without loss of generality that $\{g_1,\dots,g_t\}$ are a minimal Gr\"obner basis of $\mathcal{F}$ for some $t\leq s$. Let $\delta_{\mathcal{F}}=\max\{d_{g_i} : d_{g_i}>\deg(g_i), 1\leq i\leq t\}$.
Then 
\begin{equation*}
\begin{split}
d_\mathcal{F}&=\max\{d_f: \ f\in (\mathcal{F}), d_f>\deg(f)\}\\&\geq
\max\{d_{g_i} : d_{g_i}>\deg(g_i), 1\leq i\leq s\}\geq \delta_{\mathcal{F}},
\end{split}
\end{equation*}
where the first equality follows from Theorem~\ref{d_f}. Therefore, it suffices to show that the statement holds for $\{g_1,\dots,g_t\}$ a minimal Gr\"obner basis.

To prove the reverse inequality, let $f\in(\mathcal{F})$ and write $$f=\sum_{i\in J} h_ig_i$$ with $J\subseteq\{1,\ldots,t\}$, $\deg(h_i)\leq\deg(f)-\deg(g_i)$ and $h_i\neq 0$ for all $i\in J$. Fix $i\in J$ and let $u_i=\max\{\deg(f),d_{g_i}\}$. Then $g_i\in V_{\mathcal{F},d_{g_i}}\subseteq V_{\mathcal{F},u_i}$ and $h_i\in R_{\leq \deg(f)-\deg(g_i)}\subseteq R_{\leq u_i-\deg(g_i)}$. It follows that $h_ig_i\in V_{\mathcal{F},u_i}$. Therefore $f\in V_{\mathcal{F},\max\{\deg(f),d_{g_i} :\, i\in J\}}$. Notice that if $i\in J$ and $\deg(g_i)=d_{g_i}$, then $d_{g_i}\leq\deg(f)$. Therefore $$\max\{\deg(f),d_{g_i} :\, i\in J\}\leq \max\{\deg(f),\delta_{\mathcal{F}}\} \;\mbox{ and }\; f\in V_{\mathcal{F},\max\{\deg(f),\delta_{\mathcal{F}}\}}.$$
This shows that $d_\mathcal{F}\leq \delta_{\mathcal{F}}$.
\end{proof}

\begin{rmk}
Theorem~\ref{gbasis} implies that the largest degree of a degree fall occuring for an element of a Gr\"obner basis of $\mathcal{F}$ with respect to a degree-compatible term order $\sigma$ is independent of the choice of $\sigma$ and of the Gr\"obner basis.
\end{rmk}

\section{Solving degree and last fall degree}\label{section:sdeLFD}

It is clear by definition that the solving degree of a system with respect to a fixed term order is bounded from below by the largest degree of an element of a Gr\"obner basis with respect to that term order. In addition, in~\cite[Theorem~1]{Gorla2018} it was proved that the solving degree with respect to a degree-compatible term order is bounded from below by the last fall degree. 
The next theorem clarifies the relation between the last fall degree and the solving degree, by providing a direct comparison. 
From a computational point of view, the theorem allows us to split estimates for the solving degree in two parts: an estimate on the degrees of the elements in a reduced Gr\"obner basis and an estimate on the last fall degree. Notice that, while the degrees of the elements in a Gr\"obner basis depend on the choice of the order and on the system of coordinates, the last fall degree does not.

\begin{thm}\label{sd&lfd}
Let $\mathcal{F}\subseteq R$ be a polynomial system. Let $\sigma$ be a degree-compatible term order. Then
$$\sd_{\sigma}(\mathcal{F})=\max\{d_{\mathcal{F}},\maxGB_{\sigma}(\mathcal{F})\}.$$
\end{thm}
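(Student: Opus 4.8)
Set $\mu=\max\{d_{\F},\maxGB_{\sigma}(\F)\}$. The plan is to prove the two inequalities $\sd_{\sigma}(\F)\geq\mu$ and $\sd_{\sigma}(\F)\leq\mu$ separately, relying throughout on the two identifications available for a degree-compatible $\sigma$: the equality $V_{\F,d}=\rowsp(M_d)$ from \eqref{eqn:rowspMM}, and the equality $V_{\F,d}=(\F)_{\leq d}$ valid for every $d\geq d_{\F}$ recorded at the start of Section~\ref{sect:lfd}.

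For the lower bound $\sd_{\sigma}(\F)\geq\mu$ it suffices to bound $\sd_{\sigma}(\F)$ below by each of $\maxGB_{\sigma}(\F)$ and $d_{\F}$. The inequality $\sd_{\sigma}(\F)\geq\maxGB_{\sigma}(\F)$ is the observation made just before the theorem: the leading terms of any Gr\"obner basis generate $\inid_{\sigma}((\F))$, so any Gr\"obner basis must contain an element whose leading term is the minimal generator of $\inid_{\sigma}((\F))$ realized by the top-degree element of the reduced basis; since $\sigma$ is degree-compatible this element has degree at least $\maxGB_{\sigma}(\F)$, and $\rowsp(M_d)\subseteq R_{\leq d}$ cannot contain it unless $d\geq\maxGB_{\sigma}(\F)$. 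The inequality $\sd_{\sigma}(\F)\geq d_{\F}$ is the content of \cite[Theorem~1]{Gorla2018}; alternatively one re-derives it by observing that if $\rowsp(M_d)$ contains a Gr\"obner basis then $V_{\F,e}=\rowsp(M_e)=(\F)_{\leq e}$ for all $e\geq d$, so $[\,d,\infty)$ lies inside the set appearing in Proposition~\ref{prop:eq_ideal}, whence $d\geq d_{\F}$.

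For the upper bound $\sd_{\sigma}(\F)\leq\mu$ I would argue directly that $\rowsp(M_{\mu})$ contains a Gr\"obner basis. Since $\mu\geq d_{\F}$, the equality $V_{\F,\mu}=(\F)_{\leq\mu}$ holds, and combining it with \eqref{eqn:rowspMM} gives $\rowsp(M_{\mu})=(\F)_{\leq\mu}$. Since $\mu\geq\maxGB_{\sigma}(\F)$, every element of the reduced Gr\"obner basis has degree at most $\mu$ and therefore lies in $(\F)_{\leq\mu}=\rowsp(M_{\mu})$; thus $\rowsp(M_{\mu})$ contains the reduced Gr\"obner basis, and by Definition~\ref{sd} we conclude $\sd_{\sigma}(\F)\leq\mu$. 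Together with the previous paragraph this yields the claimed equality.

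The delicate point is the lower bound $\sd_{\sigma}(\F)\geq d_{\F}$: one cannot deduce it from the single equality $V_{\F,d}=(\F)_{\leq d}$ at $d=\sd_{\sigma}(\F)$, because Example~\ref{ex:intervals} shows that $\{d\mid V_{\F,d}=(\F)_{\leq d}\}$ need not be an interval and may contain values well below $d_{\F}$. What saves the argument is the \emph{persistence} of the Gr\"obner basis property, namely that $\rowsp(M_d)$ containing a Gr\"obner basis forces $\rowsp(M_e)=(\F)_{\leq e}$ for all $e\geq d$; this is precisely what identifies $d_{\F}$ with the left endpoint of the right-unbounded interval in Proposition~\ref{prop:eq_ideal}. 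The upper bound, by contrast, only uses $V_{\F,\mu}=(\F)_{\leq\mu}$ at the single value $\mu\geq d_{\F}$, where it is guaranteed, so the gaps exhibited in Example~\ref{ex:intervals} cause no trouble there.
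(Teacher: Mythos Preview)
Your proof is correct and follows essentially the same route as the paper: both directions use \eqref{eqn:rowspMM} together with $V_{\F,d}=(\F)_{\leq d}$ for $d\geq d_{\F}$, and the lower bound $\sd_{\sigma}(\F)\geq d_{\F}$ is taken from \cite[Theorem~1]{Gorla2018}. The only notable difference is in the upper bound: the paper goes further and argues that each reduced Gr\"obner basis element is, up to scalar, literally a \emph{row} of the RREF matrix $M_\mu$, whereas you stop once the reduced basis lies in $\rowsp(M_\mu)$---which is all that Definition~\ref{sd} requires, so your version is in fact slightly leaner.
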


\begin{proof}
If $\mathcal{F}$ is homogeneous, then $d_{\mathcal{F}}=0$. The thesis follows from observing that $\sd_{\sigma}(\mathcal{F})=\maxGB_{\sigma}(\mathcal{F})$, whenever $\mathcal{F}$ is homogeneous. 

Suppose now that $\mathcal{F}$ is not homogeneous.
The inequality $\sd_{\sigma}(\mathcal{F})\geq d_{\mathcal{F}}$ follows from~\cite[Theorem~1]{Gorla2018}. From the definition of solving degree, $\sd_{\sigma}(\mathcal{F})\geq\maxGB_{\sigma}(\mathcal{F})$. Therefore, it suffices to show that $\sd_{\sigma}(\mathcal{F})\leq\max\{d_{\mathcal{F}},\maxGB_{\sigma}(\mathcal{F})\}.$ 
Let $\rowsp(M_d)$ be the vector space generated by the rows of the matrix obtained from the Macaulay matrix $M$ in degree $d$ after applying the algorithm described in Section~\ref{sect:prelim}.
By~\cite[Theorem~1]{Gorla2018}, $\rowsp(M_d)=V_{\mathcal{F},d}$ for every $d\in\mathbb{N}$.

Let $D=\max\{d_{\mathcal{F}},\maxGB_{\sigma}(\mathcal{F})\}$. Then 
$$\rowsp(M_D)=V_{\mathcal{F},D}=(\mathcal{F})\cap R_{\leq D},$$
where the second equality follows from the definition of last fall degree, since $D\geq d_{\mathcal{F}}$. Let $g$ be an element of the reduced Gr\"obner basis of $\mathcal{F}$ with respect to $\sigma$. Then $g\in \rowsp(M_D)$, therefore $g$ is a linear combination of the rows of the matrix $M_D$ obtained from the Macaulay matrix in degree $D$ after applying the algorithm from Section~\ref{sect:prelim}. Since $M_D$ is in reduced row-echelon form, the leading terms of the rows of $M_D$ are pairwise distinct.
 
We claim that, up to scalar multiples, $g$ is a row of $M_D$. In fact, if $g$ is a combination of more than one row, since $M_D$ is in reduced row-echelon form, then $g$ contains in its support one or more monomials that appear as leading term of a row of $M_D$, beyond its leading term. However this is impossible, since $g$ is an element of a reduced Gr\"obner basis and each row of $M_D$ corresponds to an element of $(\mathcal{F})$, hence its leading term is divisible by a leading term of an elements of the reduced Gr\"obner basis. Hence a scalar multiple of $g$ appears among the rows of $M_D$, so $\sd_{\sigma}(\mathcal{F})\leq D$.
\end{proof}

\begin{rmk}
Notice that while $\sd_{\sigma}(\mathcal{F})$ and $\maxGB_{\sigma}(\mathcal{F})$ depend in general on the term order $\sigma$ and on the system of coordinates, the last fall degree of $\mathcal{F}$ does not depend on either of them.
\end{rmk}

The next corollary follows from combining Theorem \ref{gbasis} and Theorem \ref{sd&lfd}.

\begin{cor}
Let $\mathcal{F}\subseteq R$ be a polynomial system. Let $\sigma$ be a degree-compatible term order and let $\{g_1,\dots,g_s\}$ be a reduced Gr\"obner basis of $\mathcal{F}$ with respect to $\sigma$. Then
$$\sd_{\sigma}(\mathcal{F})=\max\{d_{g_1},\ldots,d_{g_s}\}.$$
\end{cor}

The next example shows that the conclusion of Theorem~\ref{sd&lfd} is in general false for term orders which are not degree-compatible.

\begin{ex}[\cite{Gorla2018}, Example~2]
Let $d\geq 2$ be an integer. Let $\F=\{x_1-x_1x_3^{d-1},x_2-x_3^d\}$ and let $\sigma$ be the lexicographic order on $k[x_1,x_2,x_3]$ with $x_1>x_2>x_3$. The elements of $\F$ are a Gr\"obner basis of the ideal that they generate and $\sd_{\sigma}(\F)=d$. Let $$f=x_1x_3-x_1x_2=x_3(x_1-x_1x_3^{d-1})-x_1(x_2-x_3^d)\in V_{\F,d+1}.$$ Since $f\not\in V_{\F,d}=\langle x_1-x_1x_3^{d-1},x_2-x_3^d \rangle$, then $d_{\F}\geq d+1.$ Therefore 
$$\sd_{\sigma}(\F)=d<\max\{d_{\F},\maxGB_{\sigma}(\F)\}.$$
\end{ex}

\section{How does the first fall degree relate to the other invariants?}\label{sect:ffd}

The first fall degree is used by many authors as an estimate for the solving degree. While this is justified by a heuristic argument, this does not always provide a reliable estimate. In fact, it is easy to produce examples where the first fall degree and the solving degree are far apart, see \cite{DS13}.
In this section, we explore further the relation of the first fall degree with the other algebraic invariants related to the solving degree. We provide examples that show that the following are possible:
\begin{itemize}
\item $d_{\ff}(\mathcal{F})>\sd(\mathcal{F})$ or $d_{\ff}(\mathcal{F})<\sd(\mathcal{F})$,
\item $d_{\ff}(\mathcal{F})>d_{\F}$ or $d_{\ff}(\mathcal{F})<d_{\F}$, 
\item $d_{\ff}(\mathcal{F})>d_{\reg}(\F)$ or $d_{\ff}(\mathcal{F})<d_{\reg}(\F)$,
\item $d_{\ff}(\mathcal{F})>\reg(\F^h)$ or $d_{\ff}(\mathcal{F})<\reg(\F^h)$,
\end{itemize}
In addition, our examples show that the difference between the first fall degree and each of the other invariants can be arbitrarily small or arbitrarily large.

\begin{ex}
Let $q>3$ be odd. Fix any degree-compatible term order. Then $\mathcal{F}=\{x_1x_2+x_2, x_2^2-1, x_1^{q-1}-1\}\subseteq\FF_q[x_1,x_2]$ has Gr\"obner basis $\{x_1+1,x_2^2-1\}$. One can compute $$\mathcal{F}^{\mathrm{top}}=\{x_1x_2, x_2^2, x_1^{q-1}\},$$ with non-trivial syzygies
$$(x_2,-x_1,0), (x_1^{q-1},0,0), (x_2^{q-1},0,0), (x_1^{q-2},0,-x_2), (0,x_2^{q-2},0), (0,0,x_1)$$ in $B$. 
Since $\mathcal{F}^{\mathrm{top}}$ has no syzygies in degree smaller than $3$, then $d_{\ff}(\mathcal{F})=3$.
It follows from the definition that $d_{\reg}(\F)=q-1$.
One can show by direct computation that $\sd(\F)=3$.
Then $d_{\F}=3$ by Theorem~\ref{sd&lfd}.

Let $x_0$ be the homogenizing variable and consider $S=\FF_q[x_0,x_1,x_2]$. The Castelnuovo--Mumford regularity of $\F^h$ can be obtained from its minimal free resolution $$0\rightarrow S(-q-2)\rightarrow\begin{array}{c} S(-q-1) \\ \oplus \\ S(-q) \\ \oplus \\ S(-4)\end{array} \rightarrow \begin{array}{c} S(-q+1) \\ \oplus \\ S(-2)^2\end{array}\rightarrow (\F^h)\rightarrow 0$$
where $\{x_1x_2+x_2x_0, x_2^2-x_0^2, x_1^{q-1}-x_0^{q-1},x_1x_0^2+x_0^3\}$ is a Gr\"obner basis of $\F^h$ with respect to the degree reverse lexicographic term order with $x_1>x_2>x_0$. A minimal system of generators of the first syzygy module of $x_1^{q-1}-x_0^{q-1},x_1x_2+x_2x_0,x_2^2-x_0^2$ is $$\left\{\left(x_2,\sum_{j=0}^{q-2}(-1)^{j+1} x_0^jx_1^{q-2-j},0\right),(-x_2^2+x_0^2,0,x_1^{q-1}-x_0^{q-1}),(0,-x_2^2+x_0^2,x_1x_2+x_2x_0)\right\}$$ and the second syzygy module is $\langle(-x_2^2+x_0^2,-x_2,\sum_{j=0}^{q-2}(-1)^{j+1} x_0^jx_1^{q-2-j})\rangle$.
Therefore we have shown that
$$d_{\ff}(\mathcal{F})=\sd(\mathcal{F})=d_{\F}=3< q-1=d_{\reg}(\F)<q=\reg(\F^h).$$
\end{ex}

\begin{ex}\label{ex:2}
Let $q>3$ be odd. Fix any degree-compatible term order and let $\mathcal{F}=\{x_1x_2+x_2, x_2^2-1, x_3^{q-1}-1,x_1^q-x_1\}\subseteq\FF_q[x_1,x_2,x_3]$. Notice that, since $x_2^2-1\mid x_2^q-x_2$ and $x_3^{q-1}-1\mid x_3^q-x_3$, the invariants of $\F$ and those of $\F\cup\{x_2^q-x_2, x_3^q-x_3\}$ are the same. One has $$\mathcal{F}^{\mathrm{top}}=\{x_1^q,x_1x_2, x_2^2, x_3^{q-1}\}$$
and $\{x_1x_2, x_2^2, x_3^{q-1}\}\subseteq B$ has non-trivial syzygies
$$(x_2,-x_1,0), (x_1^{q-1},0,0), (x_2^{q-1},0,0), (0,x_2^{q-2},0), (0,0,x_3)\in B^3.$$ Since $\mathcal{F}^{\mathrm{top}}$ has no syzygies in degree smaller than $3$, then $d_{\ff}(\mathcal{F})=3$. Moreover, one can check by direct computation that $d_{\reg}(\F)=2q-2$.

The Macaulay matrix of $\F$ in degree $2$ is already in RREF, therefore $x_1+1\not\in V_{\F,2}$. Moreover, $x_1+1=x_2(x_1x_2+x_2)-(1+x_1)(x_2^2-1)\in V_{\F,3}$, hence $x_1+1$ has a degree fall in degree $3$. Since $\{x_1+1,x_2^2-1,x_3^{q-1}-1\}$ is a Gr\"obner basis of $(\F)$, then $d_{\F}=3$ by Theorem~\ref{gbasis}. Moreover, the solving degree is $q-1$ by Theorem~\ref{sd&lfd}.
Therefore we have
$$d_{\ff}(\mathcal{F})=d_{\F}=3<q-1=\sd(\mathcal{F})<2q-2=d_{\reg}(\mathcal{F}).$$ 
\end{ex}

\begin{ex}\label{ex:dreg<reg}
Let $q$ be an odd prime power and let $\F=\{x_1^2-1,x_2^2-1\}\subseteq \FF_q[x_1,x_2]$. Notice that $x_i^2-1\mid x_i^q-x_i$ for $i=1,2$, hence $\F$ and $\F\cup\{x_1^q-x_1,x_2^q-x_2\}$ have the same invariants.
The system $\F^h=\{x_1^2-x_0^2,x_2^2-x_0^2\}$ is a regular sequence, in particular $\reg(\F^h)=3$. Moreover, the non-trivial syzygies of $\F^{\ttop}$ in $B=R/(x_1^q,x_2^q)$ are generated by $(x_1^{q-2},0),(0,x_2^{q-2})$. This shows that $d_{\ff}(\mathcal{F})=q$. Since $(x_1^2,x_2^2)_2\neq \FF_q[x_1,x_2]_2$ and $(x_1^2,x_2^2)_3= \FF_q[x_1,x_2]_3$, then $d_{\reg}(\F)=3$. Since the elements of $\F$ are a Gr\"obner basis with respect to any term order, then $d_{\mathcal{F}}=0$ by Theorem \ref{gbasis} and $\sd(\F)=2$ by Theorem \ref{sd&lfd}. Therefore 
$$d_{\mathcal{F}}=0<2=\sd(\F)<\reg(\mathcal{F}^h)=d_{\reg}(\F)=3<q=d_{\ff}(\F).$$
\end{ex}

\begin{ex}\label{ex:4}
Let $\F=\{x^2-x, xy-1, w^q-w, w^{q-1}z^{q-1}-1\}\subseteq R=\FF_q[w,x,y,z]$, $q\neq 2$. It is easy to check that $\{x-1,y-1,w^{q-1}-1,z^{q-1}-1\}$ is a Gr\"obner basis of $(\F)$ with respect to any term order. The element $(y,-x,0,0)$ is a non-trivial syzygy of $\F^{\ttop}=\{x^2,xy,w^{q},w^{q-1}z^{q-1}\}$ of degree $3$. Since $\F^{\ttop}$ has no syzygy in degree smaller than $3$, then $d_{\ff}(\F)=3.$

One can verify by direct computation that $x-1,y-1\in V_{\F,3}\setminus V_{\F,2}$ and $w^{q-1}-1,z^{q-1}-1\in V_{\F,2q-1}\setminus V_{\F,2q-2}$, see also Example~\ref{ex:intervals}. By Theorem \ref{gbasis} we have that $\sd(\F)=2q-1$, hence also $d_{\F}=2q-1$ by Theorem~\ref{sd&lfd}. Hence 
$$d_{\ff}(\F)=3<2q-1=\sd(\F)=d_{\F}.$$
\end{ex}

The examples show that some caution is needed when using the first fall degree as a heuristic estimate for the solving degree. Since the first fall degree is defined in $B$ while the other invariants are defined in $R$, in order to have a meaningful comparison it seems appropriate to add the field equations to the system when computing the invariants over $R$. But even in that case, Examples \ref{ex:2} and \ref{ex:dreg<reg} show that the first fall degree may be far from the other invariants of the system, including its solving degree.

Our examples suggest two possible causes for this behavior: When there are nontrivial syzygies with a large difference in their degrees, the first fall degree may be much smaller than the other invariants, including the solving degree. This is the case, e.g., in Examples \ref{ex:2} and \ref{ex:4}. Moreover, the fact that the trivial syzygies do not influence the first fall degree may make the first fall degree artificially large, creating a gap between the first fall degree and the other invariants. This is the case, e.g., in Example \ref{ex:dreg<reg}.

\section{Degree of regularity and Castelnuovo--Mumford regularity}\label{sect:reg}

Both the degree of regularity and the Castelnuovo--Mumford regularity provide upper bounds on the solving degree of a system of polynomial equations. The main result of this section is an inequality between the degree of regularity and the Castelnuovo--Mumford regularity. Throughout the section we assume that $(\F^{\ttop})_d=R_d$ for $d\gg 0$, as this is necessary for defining the degree of regularity.

The following relation between the solving degree of a system $\mathcal{F}$ of polynomial equations and the Castelnuovo--Mumford regularity of the system $\mathcal{F}^h$ obtained from $\mathcal{F}$ by homogenizing its equations with respect to a new variable was proved in \cite{CG21}. In case $\mathcal{F}$ is homogeneous, then $\mathcal{F}^h=\mathcal{F}$. 

\begin{thm}[{\cite[Theorem 9, Theorem 10, and Theorem 11]{CG21}}]\label{CMreg}
Let $\mathcal{F}\subseteq \FF_q[x_1,\dots,x_n]$ be a polynomial system which contains the field equations.  
Then
$$\sd(\mathcal{F})\leq\reg(\mathcal{F}^h).$$
\end{thm}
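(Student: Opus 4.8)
The plan is to fix the degree reverse lexicographic (DRL) order on $S=\FF_q[x_0,x_1,\dots,x_n]$ with the homogenizing variable $x_0$ as the \emph{smallest} variable, and to read $\sd(\F)$ as $\sd_{\mathrm{DRL}}(\F)$, the order for which the bound is established. Set $m=\reg(\F^h)$. By Theorem~\ref{sd&lfd} we have $\sd(\F)=\max\{d_{\F},\maxGB_{\mathrm{DRL}}(\F)\}$, so it suffices to bound each of $d_{\F}$ and $\maxGB_{\mathrm{DRL}}(\F)$ by $m$. Both bounds will follow once I show that the homogeneous Gr\"obner basis computation for $(\F^h)$ terminates by degree $m$; the role of the field equations is precisely to make $\reg(\F^h)$ control this computation.

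First I would set up the dehomogenization dictionary. Because $x_0$ is the smallest variable in DRL, the leading term of a homogeneous $F\in S$ is unchanged by setting $x_0=1$, and the dehomogenization of a DRL Gr\"obner basis of any homogeneous ideal $J\subseteq S$ is a DRL Gr\"obner basis of $J|_{x_0=1}$. Since $(\F^h)|_{x_0=1}=(\F)$, dehomogenizing a reduced DRL Gr\"obner basis of $(\F^h)$ produces a DRL Gr\"obner basis of $(\F)$ whose elements have degree no larger than before, whence $\maxGB_{\mathrm{DRL}}(\F)\le\maxGB_{\mathrm{DRL}}(\F^h)$. The same correspondence identifies the rows of the Macaulay matrix of $\F^h$ in degree $d$ with those of $\F$ in degree $\le d$, so a complete homogeneous computation in degree $m$ transfers to $\rowsp(M_d)=(\F)_{\le d}$, i.e. $V_{\F,d}=(\F)_{\le d}$ for all $d\ge m$, which gives $d_{\F}\le m$ by Proposition~\ref{prop:eq_ideal}.

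The core step, and the one I expect to be the main obstacle, is the inequality $\maxGB_{\mathrm{DRL}}(\F^h)\le m=\reg(\F^h)$. For an arbitrary homogeneous ideal this needs generic coordinates (Bayer--Stillman: in generic coordinates $\reg(\inid_{\mathrm{DRL}}(\cdot))=\reg(\cdot)$, and the latter equals the top generator degree of the Borel-fixed $\gin$), and over the finite field $\FF_q$ one cannot perform a generic linear change of variables. Here the field equations rescue us: homogenizing $x_i^q-x_i$ gives $x_i^q-x_ix_0^{q-1}$, which on the hyperplane $\{x_0=0\}$ forces $x_i=0$ for all $i$, so $V(\F^h)\subseteq\PP^n$ has no points at infinity. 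Consequently $x_0$ lies in no associated prime of the saturation of $(\F^h)$ other than the irrelevant one, that is, $x_0$ is a nonzerodivisor modulo saturation, hence filter-regular. I would then invoke the standard fact that when the DRL-smallest variable is filter-regular the given coordinates are already ``generic enough'', so that $\reg(\inid_{\mathrm{DRL}}(\F^h))=\reg(\F^h)$; since the maximal generator degree of a monomial ideal is bounded by its regularity, this yields $\maxGB_{\mathrm{DRL}}(\F^h)\le\reg(\inid_{\mathrm{DRL}}(\F^h))=m$. Verifying this genericity reduction carefully, essentially checking the Bayer--Stillman criterion for the filter-regular sequence ending in $x_0$ rather than for a random sequence of linear forms, is the technical heart of the argument.

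Finally I would assemble the pieces. For a homogeneous system there are no degree falls, so the Lazard computation for $(\F^h)$ produces a Gr\"obner basis once the working degree reaches $\maxGB_{\mathrm{DRL}}(\F^h)$, which by the previous step is $\le m$. Transferring through the dehomogenization dictionary, $\rowsp(M_m)$ for $\F$ equals $(\F)_{\le m}$ and contains a Gr\"obner basis of $(\F)$, whence $\sd(\F)\le m=\reg(\F^h)$; equivalently, combining $\maxGB_{\mathrm{DRL}}(\F)\le m$ and $d_{\F}\le m$ with Theorem~\ref{sd&lfd} gives the same conclusion. The only hypothesis used beyond generalities is the presence of the field equations, and it enters exactly once, to guarantee the absence of points at infinity and hence the filter-regularity of $x_0$.
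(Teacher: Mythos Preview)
This theorem is not proved in the present paper; it is quoted from \cite{CG21} without argument, so there is no proof here to compare against. Your sketch is a reasonable reconstruction of the argument in \cite{CG21}, and the core technical step you identify---field equations force the homogenized system to have no points on $\{x_0=0\}$, hence $x_0$ is filter-regular, hence the DRL Gr\"obner basis of $(\F^h)$ lives in degrees at most $\reg(\F^h)$---is exactly the content of \cite[Theorems~10--11]{CG21}.

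Two remarks. First, the detour through Theorem~\ref{sd&lfd} is unnecessary (and anachronistic: \cite{CG21} predates that result). Once you know the reduced DRL Gr\"obner basis of $(\F^h)$ sits in degree $\le m$ and dehomogenizes to a Gr\"obner basis of $(\F)$, the rowspace of the degree-$m$ Macaulay matrix of $\F$ already contains a Gr\"obner basis \emph{before} any iteration, so $\sd_{\mathrm{DRL}}(\F)\le m$ directly. Splitting into $d_{\F}$ and $\maxGB$ adds nothing.

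Second, a caution on the step you flag as the technical heart. The assertion ``$x_0$ filter-regular implies $\reg(\inid_{\mathrm{DRL}}(\F^h))=\reg(\F^h)$'' is \emph{not} a standard fact for ideals of arbitrary dimension: the Bayer--Stillman criterion requires a filter-regular sequence of length $\dim S/(\F^h)$, not just the last variable. What saves you is that the field equations also force $\dim S/(\F^h)\le 1$, so a single filter-regular linear form suffices. Thus the field equations enter twice, not once: they make the affine system zero-dimensional \emph{and} they exclude points at infinity. Your final sentence understates this; \cite{CG21} packages both uses into the single hypothesis that $(\F^h)$ is ``in generic coordinates.''
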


The next result was proved by Semaev and Tenti. To the extent of our knowledge, this is the first proven upper bound for the solving degree of a system in terms of its degree of regularity. Notice however that the solving degree considered by Semaev and Tenti may be different from the one considered in this paper, as Semaev and Tenti work in the quotient ring $\FF_q[x_1,\dots,x_n]/(x_1^q-x_1,\ldots,x_n^q-x_n)$. For more details see the doctoral thesis of Tenti \cite{T}, in particular the discussion above \cite[Theorem 2.1]{ST} and \cite[Corollary 3.67]{T}.

\begin{thm}[{\cite[ Theorem 2.1]{ST}}]\label{semaevtenti}
Let $\mathcal{F}=\{f_1,\ldots,f_r,x_1^q-x_1,\ldots,x_n^q-x_n\}\subseteq \FF_q[x_1,\dots,x_n]$ be a polynomial system.
If $d_{\reg}(\F)\geq\max\{q,\deg(f_1),\ldots,\deg(f_r)\}$, then $$\sd(\mathcal{F})\leq 2d_{\reg}(\F)-2.$$
\end{thm}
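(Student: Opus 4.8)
The plan is to combine Theorem~\ref{sd&lfd} with two separate degree bounds. Fix a degree-compatible term order $\sigma$ and write $d=d_{\reg}(\F)$. Since the field equations $x_i^q-x_i$ belong to $\F$, the ideal $(\F^{\ttop})$ contains $(x_1^q,\dots,x_n^q)$, so $(\F^{\ttop})_e=R_e$ for $e\gg 0$ and, by the remark after Definition~\ref{def-dregFaugere}, $d=\min\{e\mid (\F^{\ttop})_e=R_e\}$. By Theorem~\ref{sd&lfd} it then suffices to show $\maxGB_\sigma(\F)\le 2d-2$ and $d_\F\le 2d-2$. Note that $d\ge q\ge 2$, hence $d\le 2d-2$; the hypotheses $d\ge q$ and $d\ge\deg(f_i)$ will be used to guarantee that every generator has degree at most $d$.

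First I would bound $\maxGB_\sigma(\F)$. For every monomial $m$ of degree $e\ge d$ we have $m\in R_e=(\F^{\ttop})_e$, so $m=\sum_\alpha c_\alpha u_\alpha f_{i_\alpha}^{\ttop}$ with $\deg(u_\alpha f_{i_\alpha}^{\ttop})=e$; the lift $g=\sum_\alpha c_\alpha u_\alpha f_{i_\alpha}\in(\F)$ then has top part $g^{\ttop}=m$, and since $\sigma$ is degree-compatible $\inid(g)=\inid(m)=m$. Hence $\inid((\F))_e=R_e$ for all $e\ge d$, so the monomial ideal $\inid((\F))$ has no minimal generator in degree $>d$ and every element of the reduced Gr\"obner basis has degree at most $d\le 2d-2$.

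Next I would bound $d_\F$. By Proposition~\ref{prop:eq_ideal} it is enough to prove $V_{\F,e}=(\F)_{\le e}$ for every $e\ge 2d-2$. The main tool is a top-degree reduction: if $f\in(\F)$ has degree $\delta\ge d$, then writing $f^{\ttop}=\sum_\alpha c_\alpha u_\alpha f_{i_\alpha}^{\ttop}$ as above and using that each $f_{i_\alpha}$ has degree $\le\delta$ (as $\delta\ge d\ge q,\deg(f_i)$), we get $\sum_\alpha c_\alpha u_\alpha f_{i_\alpha}\in V_{\F,\delta}$ while $f-\sum_\alpha c_\alpha u_\alpha f_{i_\alpha}\in(\F)$ has degree $<\delta$. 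Iterating this entirely inside $V_{\F,\delta}$ peels off the top degree until the residue has degree $\le d-1$. Thus for $e\ge 2d-2$ every $f\in(\F)_{\le e}$ is an element of $V_{\F,e}$ plus a residue in $(\F)_{\le d-1}$, and it remains to prove that $(\F)_{\le d-1}\subseteq V_{\F,2d-2}$.

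This last inclusion is the crux, and I expect it to be the main obstacle; it is exactly here that the factor of two appears. Every ideal element of degree $\le d-1$ must already be produced by the linear-algebra process by degree $2(d-1)$, equivalently (Theorem~\ref{d_f}) the largest degree fall is at most $2d-2$. The mechanism is that such a low-degree element arises from a cancellation of leading forms among multiples of the generators, i.e. from a syzygy of $\F^{\ttop}$; since the Artinian algebra $R/(\F^{\ttop})$ vanishes in degrees $\ge d$ and has top nonzero degree $d-1$, the relevant nontrivial syzygies---which pair two objects of degree at most $d-1$---are supported in degree at most $2d-2$. Making this precise, via the socle/duality structure of $R/(\F^{\ttop})$, is the delicate step. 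Granting it, $V_{\F,e}=(\F)_{\le e}$ for all $e\ge 2d-2$, so $d_\F\le 2d-2$, and together with the bound on $\maxGB_\sigma(\F)$ and Theorem~\ref{sd&lfd} we conclude $\sd_\sigma(\F)\le 2d-2$. Finally, since Semaev and Tenti measure the solving degree inside the quotient $\FF_q[x_1,\dots,x_n]/(x_1^q-x_1,\dots,x_n^q-x_n)$, a short bookkeeping step matching their setting with Definition~\ref{sd} is needed, and the presence of the field equations in $\F$ is what makes the two frameworks compatible.
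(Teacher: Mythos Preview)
The paper does not give its own proof of this theorem: it is quoted verbatim from \cite{ST} and only cited, with the explicit caveat that Semaev and Tenti work in the quotient ring $\FF_q[x_1,\dots,x_n]/(x_1^q-x_1,\dots,x_n^q-x_n)$ and that their solving degree may differ from Definition~\ref{sd}. So there is nothing in the paper to compare your argument against; what you are really attempting is an independent proof of the Semaev--Tenti bound using the machinery of Theorem~\ref{sd&lfd}.

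Your reduction via Theorem~\ref{sd&lfd} is a sound strategy, and two of your three steps are correct. The argument that $\maxGB_\sigma(\F)\le d$ is fine: lifting a representation $m=\sum c_\alpha u_\alpha f_{i_\alpha}^{\ttop}$ to $g=\sum c_\alpha u_\alpha f_{i_\alpha}$ does give $g^{\ttop}=m$, hence $\inid_\sigma(g)=m$, so $\inid_\sigma((\F))\supseteq R_{\ge d}$ and no reduced Gr\"obner basis element has degree larger than $d$. The top-degree reduction showing that any $f\in(\F)_{\le e}$ with $e\ge d$ is congruent modulo $V_{\F,e}$ to an element of $(\F)_{\le d-1}$ is also correct.

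The genuine gap is exactly where you flag it: the inclusion $(\F)_{\le d-1}\subseteq V_{\F,2d-2}$. You do not prove it, and your heuristic does not survive scrutiny. The sentence ``the relevant nontrivial syzygies---which pair two objects of degree at most $d-1$---are supported in degree at most $2d-2$'' is not a valid inference: syzygies of $\F^{\ttop}$ are relations among the \emph{generators} (which have degrees up to $d$, not $d-1$), and the vanishing of $R/(\F^{\ttop})$ in degrees $\ge d$ gives no direct degree bound of the form $2d-2$ on syzygies. Knowing that $R/(\F^{\ttop})$ is Artinian with socle in degree $d-1$ tells you $\reg((\F^{\ttop}))=d$, hence minimal first syzygies sit in degree at most $d+1$; but that neither matches $2d-2$ nor translates into the needed statement about $V_{\F,2d-2}$. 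Establishing this inclusion is precisely the substantive combinatorial work carried out in \cite{ST,T}, and it does not fall out of the regularity or duality facts you invoke. As written, the proposal reduces the theorem to its hard core and then assumes it.
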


Because of Theorem~\ref{CMreg} and Theorem~\ref{semaevtenti}, both the degree of regularity of $\mathcal{F}$ and the Castelnuovo--Mumford regularity of $\mathcal{F}^h$ produce estimates for the solving degree of $\mathcal{F}$. Hence it is interesting to relate these invariants to each other. 

\begin{thm}\label{thm:reg}
Let $\mathcal{F}\subseteq R=k[x_1,\dots,x_n]$ be a polynomial system. Assume that $(\F^{\ttop})_d=R_d$ for $d\gg 0$.
Then $$d_{\reg}(\F)\leq \reg(\F^h).$$
\end{thm}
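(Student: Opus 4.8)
The plan is to translate both invariants into data attached to the single graded ring $A=S/(\F^h)$, where $S=R[t]$ is the homogenizing polynomial ring, and to recover $d_{\reg}(\F)$ from the hyperplane section $A/tA$.

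First I would record the basic identification. Reducing the generators of $(\F^h)$ modulo $t$ sends each $f_i^h$ to $f_i^{\ttop}$, so that $(\F^h)+(t)=(\F^{\ttop})+(t)$ as ideals of $S$, and hence $A/tA\cong R/(\F^{\ttop})$. Since by definition $d_{\reg}(\F)=\ireg{\F^{\ttop}}$ is the least $e$ such that $\HF_{R/(\F^{\ttop})}(d)=\HP_{R/(\F^{\ttop})}(d)$ for all $d\ge e$, this exhibits $d_{\reg}(\F)$ as the index of regularity of the module $A/tA$. On the other side, I would invoke the standard identity $\reg(\F^h)=\reg(A)+1$ relating the Castelnuovo--Mumford regularity of an ideal to that of its quotient.

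The core of the argument is then the chain of (in)equalities
\[
d_{\reg}(\F)=\ireg{A/tA}\le \reg(A/tA)+1\le \reg(A)+1=\reg(\F^h).
\]
The first inequality is Mumford's theorem that the Hilbert function of a finitely generated graded module agrees with its Hilbert polynomial in every degree strictly larger than its regularity. The final equality is the identity just mentioned. The middle inequality, $\reg(A/tA)\le\reg(A)$, is the statement that regularity does not increase when one passes to the quotient by a linear form.

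The hard part is precisely this middle step, and the subtlety is that $t$ need not be a nonzerodivisor on $A$: the ideal $(\F^h)$ is generated by the homogenized equations and is in general not saturated with respect to $t$, so the clean equality $\reg(A/tA)=\reg(A)$ valid for a regular element is unavailable, and one must control the torsion $(0:_A t)$. I would handle this either by invoking the standard fact that $\reg(M/\ell M)\le\reg(M)$ for any linear form $\ell$ on a finitely generated graded module $M$, or directly through the four-term exact sequence $0\to(0:_A t)(-1)\to A(-1)\xrightarrow{t}A\to A/tA\to 0$ together with the long exact sequence in local cohomology, which shows that $H^i_{\mathfrak{m}}(A/tA)_d=0$ in every degree $d\ge\reg(A)+1$. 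Finally, I would stress that it is essential to work with the \emph{non-saturated} quotient $R/(\F^{\ttop})$ itself rather than with its saturation: saturating would replace $(\F^{\ttop})$ by the larger ideal of leading forms of $(\F)$, changing the Hilbert function and hence the degree of regularity, so the torsion cannot simply be discarded.
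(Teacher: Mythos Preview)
Your overall route coincides with the paper's: both reduce everything to the single inequality $\reg(A/tA)\le\reg(A)$ for $A=S/(\F^h)$, after identifying $A/tA\cong R/(\F^{\ttop})$ and using $d_{\reg}(\F)=\ireg{A/tA}\le\reg(A/tA)+1$ and $\reg(\F^h)=\reg(A)+1$. Where you diverge from the paper is precisely in how you justify that middle inequality, and this is where your argument has a genuine gap.

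Your first option, ``the standard fact that $\reg(M/\ell M)\le\reg(M)$ for any linear form $\ell$,'' is only standard under the filter--regular hypothesis that $(0:_M\ell)$ has finite length---exactly the hypothesis you correctly point out can fail here. Your second option, the four-term sequence plus local cohomology, does not close as stated: chasing the two short exact sequences obtained from $0\to(0:_At)(-1)\to A(-1)\to A\to A/tA\to 0$ gives, for $d+i>\reg(A)$, an injection $H^i_{\mathfrak m}(A/tA)_d\hookrightarrow H^{i+2}_{\mathfrak m}\!\big((0:_At)\big)_{d-1}$, and you now need a bound on the local cohomology of the torsion submodule $(0:_At)$. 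Nothing in your sketch provides one, and a submodule of $A$ can have strictly larger regularity than $A$, so this is exactly the point that requires work.

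The paper supplies the missing ingredient by invoking \cite[Theorem~3.1]{CHHKTT}, which says that $\reg_S(\F^h)\in\{\reg_S((\F^{\ttop})+(t)),\ \reg_S((\F^h):t)+1\}$. Combining this dichotomy with the mapping-cone bound $\reg_S((\F^{\ttop})+(t))\le\max\{\reg_S(\F^h),\reg_S((\F^h):t)\}$ coming from the short exact sequence $0\to S/((\F^h):t)(-1)\to S/(\F^h)\to S/((\F^{\ttop})+(t))\to 0$ forces $\reg_S((\F^{\ttop})+(t))\le\reg_S(\F^h)$ in both cases. So your plan is correct, but the step you flag as ``the hard part'' really is the hard part, and it needs the Caviglia--H\`a--Herzog--Kummini--Terai--Trung result (or an equivalent argument) rather than either of the shortcuts you propose.
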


\begin{proof}
Since $d_{\reg}(\F)=\reg_R(\F^{\ttop})$ by \cite[Proposition 2]{CG21}, it suffices to show that $\reg_R(\F^{\ttop})\leq\reg_S(\F^h)$, where $S=R[t]$. Write $(\F^h)=J+tH$, where $t\nmid 0$ modulo $J$. Then $(\F^h):t=J+H$ and $S/(\F^h)+(t)=S/(\F^{\ttop})+(t)\cong R/(\F^{\ttop})$. Therefore $$\reg_R(\F^{\ttop})=\reg_S((\F^{\ttop})+(t)).$$
Moreover, one has the short exact sequence $$0\rightarrow S/J+H(-1)\rightarrow S/(\F^h)\rightarrow S/(\F^{\ttop})+(t)\rightarrow 0.$$
By the Mapping Cone Construction one has that \begin{equation}\label{eqn:mappingcone}
\reg_S((\F^{\ttop})+(t))\leq\max\{\reg_S(\F^h),\reg_S(J+H)\}.
\end{equation}
Moreover, by \cite[Theorem 3.1]{CHHKTT}
\begin{equation}\label{eqn:CHHKTT}
\reg_S(\F^h)\in\{\reg_S((\F^{\ttop})+(t)),\reg_S(H+J)+1\}.
\end{equation}
If $\reg_S(\F^h)=\reg_S(H+J)+1$, then by \eqref{eqn:mappingcone} $\reg_S((\F^{\ttop})+(t))\leq \reg_S(H+J)+1=\reg_S(\F^h)$. If $\reg_S(\F^h)\neq\reg_S(H+J)+1$, then by \eqref{eqn:CHHKTT} $\reg_S((\F^{\ttop})+(t))=\reg_S(\F^h)$.
\end{proof}

\bibliographystyle{amsplain}

\end{document}